\newcommand{\be}{\begin{equation}}
\newcommand{\ee}{\end{equation}}
\newcommand{\bea}{\begin{eqnarray}}
\newcommand{\eea}{\end{eqnarray}}
\newcommand{\ket}{\rangle}
\newcommand{\bra}{\langle}
\begin{document}
\newtheorem{theorem}{Theorem}
\newtheorem{proposition}[theorem]{Proposition}
\newtheorem{corollary}[theorem]{Corollary}
\newtheorem{open problem}[theorem]{Open Problem}
\newtheorem{Definition}{Definition}

\title{A local model of quantum Turing machines}

\author{Dong-Sheng Wang}
\email{wdscultan@gmail.com}
\affiliation{Institute for Quantum Computing and Department of Physics and Astronomy,
University of Waterloo,  Waterloo, ON, N2L 3G1, Canada}


\begin{abstract}
The model of local Turing machines is introduced,
including classical and quantum ones,
in the framework of matrix-product states.
The locality refers to the fact that at any instance of the computation
the heads of a Turing machine have definite locations.
The local Turing machines are shown to be equivalent to the corresponding circuit models
and standard models of Turing machines by simulation methods.
This work reveals the fundamental connection between tensor-network states
and information processing.
\end{abstract}

\date{\today}

\pacs{03.67.-a, 03.67.Lx, 03.67.Bg}
\maketitle

\section{Introduction}
\label{sec:intr}

Matrix-product states and tensor-network states
have been playing central roles in quantum information science~\cite{AKLT87,FNW92,PVW+07,Sch11}.
They could be used, but not limited,
to characterize entanglement in many-body systems~\cite{ECP10},
construct topological quantum error-correcting codes~\cite{Kit03},
and enable universal quantum computing in the measured-based models~\cite{BBD+09}.

In this work, we reveal a fundamental connection
between matrix-product states and quantum computation.
We find that a universal quantum Turing machine~\cite{Deu85,BV97,Yao93},
which is equivalent to the usual quantum circuit model,
can be defined in the framework of matrix-product states.
On the one hand,
our model of quantum Turing machines greatly simplifies
the functionality of the standard ones
with a locality structure;
on the other hand,
our model establishes a sort of `duality' between
information processing and
matrix-product states (and also tensor-network states),
hence bringing together perspectives and results
from both sides.

Quantum computation is a computing model that operates
according to quantum mechanical rules.
In general, information can be processed by the interaction between registers
(i.e., string of bits) and an external drive,
or interaction among the registers~\cite{KW97,For03,Wat09,AB09};
quantum Turing machines (QTM) belongs to the first case,
and quantum circuit model (QCM) belongs to the second case.
While QCM has been the canonical model for quantum computing,
QTM, as a universal computing model,
are relatively less understood for physicists~\cite{Lan61,Ben73,Llo00,Mye97,Oza98,ON00,Shi02,MO05,FHJ+07,PJ06,LLS09,MW12,MW19}.

Turing machine, which lies at the heart of the theory of computation,
is a universal mathematical or computational model to study algorithms
and the process of computation.
Physically, a Turing machine (TM) is a bipartite system including a register tape and a processor,
interaction between which is enabled by a read/write head.
The standard QTM~\cite{Deu85,BV97,Yao93} is `fully' quantum~\cite{Mye97,Oza98,ON00,Shi02}
in the sense that all elements of TM are quantized,
which results in the superposition of head position
and the nonlocal interactions between the tape and the processor.
Observe that quantum systems do not have to be fully quantum;
on the contrary,
there are usually classical ingredients
(e.g., sites on a lattice, temperature, external potential)
in models of quantum systems.
We find that,
a local and simplified QTM can be defined
by making the head position classical,
without loss of universality for quantum computing.
Local interactions are physically appealing, and
it is not hard to see
that it is present in the circuit models (with local gates),
as well as the classical TM (CTM).

There are also other study that provides hints for variations of TM.
With teleportation and gates realized by nonunitary means~\cite{GC99},
a measurement-based QTM was introduced~\cite{PJ04,PJ06},
which, by construction, does not have problems such as
halt qubit~\cite{Mye97,Oza98,Shi02}
and the locality issue of head position.
With qubits as passive memory, i.e., no direct interactions among them,
computing models with projections on ancilla~\cite{AOK+10}
and automatically decoupled ancilla~\cite{PAK13}
are proven to be universal by simulating a universal set of gates.
In both models each register qubit may be acted upon many times,
i.e., the interaction is not sequential.
In quantum optics, a so-called qubus model~\cite{SNB+06} was developed,
which realizes gates on qubits by non-sequential interactions with a quantum bus,
which is infinite dimensional.

Motivated by the observation above,
the model of local TM is introduced in this work,
including classical and quantum ones.
The probabilistic (or stochastic) TM and quantum stochastic TM are also introduced,
and are shown to be reducible to local CTM and QTM, respectively.
The local structure is brought and imprinted onto a TM
from the matrix-product states (MPS) formalism~\cite{AKLT87,FNW92,PVW+07,Sch11}.

This work contains the following sections.
We first present MPS formalism and also develop techniques that are suitable for TMs in section~\ref{sec:mpsprop}.
We then define local TMs in section~\ref{sec:tm}.
A study of probabilistic TMs is also presented in the Appendix.

\section{Matrix-product states}
\label{sec:mpsprop}

\subsection{Quantum channels}
\label{subsec:qchannel}

We first review some basic properties of quantum channels,
which are needed to understand matrix-product states.
From Stinespring dilation theorem and Kraus operator-sum representation~\cite{Sti55,Kra83,Cho75},
a quantum channel $\mathcal{E}$, i.e.,
a completely positive trace preserving (CPTP) map,
can be represented as
\begin{equation}\label{eq:chano}
  \mathcal{E}(\rho)=\sum_\ell K_\ell \rho K_\ell^\dagger, \; \forall \rho
\end{equation}
for a set of Kraus operators $\{K_\ell\}$
and $\sum_\ell K_\ell^\dagger K_\ell =\mathds{1}$.
Furthermore, the set of Kraus operators corresponds to an isometry operator
$V:=\sum_\ell |\ell\rangle K_\ell$ for $|\ell\rangle$ as ancilla state.
The isometry can be embedded into a unitary operator $U$ such that $V=U|0\rangle$
and $K_\ell=\langle \ell|U|0\rangle$.

The transfer matrix~\cite{Cho75,BZ06} of a quantum channel $\mathcal{E}$ is
\begin{equation}\label{eq:transfer}
\mathcal{T}_\mathcal{E}=\sum_\ell K_\ell\otimes K_\ell^*.
\end{equation}
The dynamics $\mathcal{E}:\rho\mapsto \mathcal{E}(\rho)$
is equivalent to $\mathcal{T}_\mathcal{E}:|\rho\rangle \mapsto \mathcal{T}_\mathcal{E}|\rho\rangle$,
for a quantum state $\rho=\sum_{ij}\rho_{ij}|i\rangle\langle j|$
and $|\rho\rangle=\sum_{ij}\rho_{ij}|i\rangle| j\rangle$.
For $K_\ell=\sum_{ij} k^\ell_{ij}|i\rangle\langle j| $, then
$\mathcal{T}_\mathcal{E}=\sum_{\ell ijkl} k^\ell_{ij} \bar{k}^\ell_{kl} |ik\rangle\langle jl|.$
Ignoring the coherence part the matrix
\begin{equation}\label{}
  \mathcal{S}_\mathcal{E}=\sum_{\ell ij} k^\ell_{ij} \bar{k}^\ell_{ij} |ii\rangle\langle jj|
\end{equation}
is stochastic (not doubly) as $\sum_{\ell i} |k^\ell_{ij}|^2=1$,
and can be treated as the stochastic version of $\mathcal{T}_\mathcal{E}$.
Note $|ii\rangle$ can be simply viewed as
an encoding of $|i\rangle$, same for $\langle jj|$.
For example,
$\mathcal{S}_\mathcal{U}=  \sum_{ ij} |u_{ij}|^2 |i\rangle\langle j|$
of a unitary operator $U$ is doubly stochastic, and also orthostochastic.
The stochastic version of a random unitary channel is also doubly stochastic.

\subsection{Matrix-product states and quantum circuits}
\label{subsec:mpsqc}

Any finite-dimensional $N$-partite quantum state can be expressed as a MPS
\begin{equation}\label{eq:mps0}
  |\Psi\rangle=\sum_{i_1,\dots,i_N} \langle A^{i_N}|A^{i_{N-1}} \cdots A^{i_2} |A^{i_1}\rangle |i_1 \dots i_N\rangle
\end{equation}
for the open boundary condition (OBC) case.
This form is in the so-called right-canonical form,
while left-canonical and other forms are also available~\cite{Sch11}.
These $A$ matrices act on the so-called correlation space,
also known as virtual space, ancillary space etc,
and the correlation space dimension $\chi$
is also known as the bond or virtual dimension.
Regarding LQTM,
the $N$ particles (or 'spins', qubits) are on the tape,
and the correlator is the processor.
Tracing out the system results in a sequence of quantum channels
$\mathcal{E}_n$ on the correlator
such that
$\mathcal{E}_n(\rho)=\sum_{i_n} A^{i_n} \rho A^{i_n \dagger}$,
and $\sum_{i_n}A^{i_n\dagger}A^{i_n}=\mathds{1}$ for each $n=1,\dots,N$.

The boundary condition is specified by the set of column vectors $\{|A^{i_1}\rangle\}$
and the set of row vectors $\{\langle A^{i_N}|\}$.
For the first site, $\sum_{i_1}A^{i_1\dagger}A^{i_1}=1$, each $A^{i_1}$ is a column vector but not-normalized,
while its norm is a singular value.
For the last site, $\sum_{i_N}A^{i_N\dagger}A^{i_N}=\mathds{1}$, each $A^{i_N}$ is a row vector and normalized,
and they come from each column of a unitary operator
that appears in the first step of singular value decomposition (SVD) to derive MPS~\cite{Sch11}.
For the OBC case, the form of MPS is usually simplified as
\begin{align}\label{eq:tiobcmps}
|\Psi\rangle=
\sum_{i_1,\dots,i_N} \langle R| A^{i_N} \cdots A^{i_1} |L\rangle |i_1\dots i_N\rangle,
\end{align}
which may not be normalized due to
the probability of the final projection $\langle R|$.
However, the normalization condition can be easily handled, so it does not cause problem.
For the PBC case, the MPS takes the following form
\begin{align}\label{}
|\Psi\rangle=\sum_{i_1,\dots,i_N} \text{tr}(A^{i_N}\cdots A^{i_1})|i_1\dots i_N\rangle.
\end{align}
We observe that this state can be prepared by using $|\omega\rangle=\sum_i |ii\rangle$ as
both the initial and final states of the correlator.
The bond dimension is actually $\chi^2$,
but the $A$ matrices only act on half of the space,
so the effective bond dimension is still $\chi$.
Also the PBC case can be viewed as a special case of OBC
when each vector $|A^{i_1}\rangle$ is equivalent to $|A^{i_N}\rangle$
and $\{|A^{i_1}\rangle\}$ forms a basis of the correlation space.

Next we study how to prepare a MPS~(\ref{eq:mps0}) by a quantum circuit.
To do so, the dilation for each of the channels $\mathcal{E}_n$ is employed.
The first channel $\mathcal{E}_1$ is defined by the set of Kraus operators $\{|A^{i_1}\rangle\}$,
and the last channel $\mathcal{E}_N$ is defined by $\{\langle A^{i_N}|\}$.
The channel $\mathcal{E}_1$ maps from dimension $\chi_0=1$ to dimension $\chi_1$,
while the channel $\mathcal{E}_N$ maps from dimension $\chi_{N-1}$ to dimension $\chi_N=1$,
while each other channel $\mathcal{E}_n$ in between
maps the bond dimension from $\chi_{n-1}$ to $\chi_n$.
From the SVD process there exists relations between each $\chi$ and $d$,
e.g., $\chi_{N-1}\leq d$~\cite{Sch11}.
Implementing each $\mathcal{E}_n$ requires the dilation of channels that alter dimension.
For a rank-$r$ CPTP channel from dimension $n$ to $m$,
one input ancilla with dimension $\lceil\frac{rm}{n}\rceil$ is needed.
Note that the input system and ancilla do not correspond to the output system and ancilla, respectively,
due to the change of dimension.
Now a channel $\mathcal{E}_n$ can be realized by a unitary $U_n$ with dimension $d \chi_n $,
and from $A^{i_n}=\langle i_n|U_n|0\rangle$,
$\{A^{i_n}\}$ occupy the first block-column of $U_n$.
For the last unitary $U_N$, special cares are needed.
If $\chi_{N-1}=d$, then no ancilla is needed,
which means the correlator itself becomes the last physical spin,
and then it is traced out after a unitary rotation $U_N$ such that $\langle A^{i_N}|=\langle i_N| U_N$,
which appears in the first step of SVD for the right-canonical form.
If $\chi_{N-1}< d$ then an ancilla is needed
and $\langle A^{i_N}|=\langle i_N| U_N|0\rangle$
for $|0\rangle$ as the initial state of this ancilla.

The whole state preparation process is as follows.
First, apply a sequence of unitary gates from $U_1$ till $U_{N-1}$
  \begin{align}\label{}
  &U_{N-1}\cdots U_1|0\rangle_v|0\rangle_1\cdots |0\rangle_{N-1} \\ \nonumber
  &=\sum_{i_1,\dots, i_{N-1}}A^{i_{N-1}}\cdots A^{i_2}|A^{i_1}\rangle|i_1\dots i_{N-1}\rangle,
\end{align}
where $|0\rangle_v$ is the initial state of the virtual correlator.
If $\chi_{N-1}=d$, apply $U_N$ first and trace out the correlator, the state becomes
    \begin{align}\label{}
      &\sum_{i_N}\langle i_N| U_NU_{N-1}\cdots U_1|0\rangle_v|0\rangle_1\cdots |0\rangle_{N-1} |i_N\rangle  \\ \nonumber
      &= \sum_{i_N}\sum_{i_1,\dots, i_{N-1}}\langle i_N| U_NA^{i_{N-1}}\cdots A^{i_2}|A^{i_1}\rangle|i_1\dots i_{N-1}\rangle |i_N\rangle,
    \end{align}
which is the MPS~(\ref{eq:mps0}).
If $\chi_{N-1}<d$, append the last ancilla with $|0\rangle$ such that $W:=U_N|0\rangle$ and $\langle A^{i_{N}}|=\langle i_N|W$.
Applying $U_N$ and tracing out the final system (both correlator and ancilla)
yields the MPS~(\ref{eq:mps0}).

As we can see, the change of bond dimension complicates the MPS circuit,
so instead,
these matrices can be enlarged to have the same bond dimension as the largest one,
and indeed, in practice many states can be described by MPS with constant bond dimensions.
Therefore, it can be assumed that all the $A$ matrices have dimension $\chi$,
and each quantum channel becomes dimension-preserving.
For the quantum circuit, the first dilation $U_1$ maps from dimension $d\chi$ to $d$-dimensional spin and $\chi$-dimensional correlator,
and the channels in the middle are simple to deal with,
while the last one deserves some attention.
The set $\{\langle A^{i_N}|\}$ still forms a channel,
but now it may hold $d\leq \chi$, while injectivity requires $d\geq \chi^2$.
This means for both injective and also $\chi^2\geq d\geq \chi$ cases
the method described above can be used.
For the case $d< \chi$, the channel cannot be TP since each vector $\langle A^{i_N}|$ is extended to a larger vector.
This means partial projection on the correlator is required,
which leads to probabilistic events.
However, we can employ the method in subsection~\ref{subsec:finalp} to avoid this.

\subsection{Avoid the final projection on correlation space}
\label{subsec:finalp}

\begin{figure}
  \centering
  \includegraphics[width=.4\textwidth]{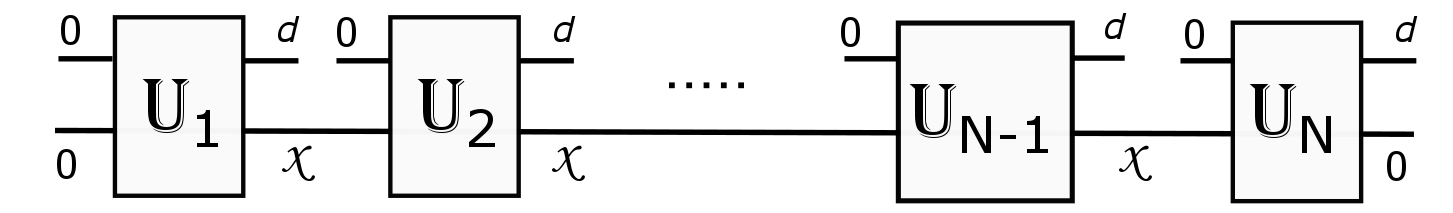}\\
  \caption{Quantum circuit to prepare a general MPS with constant bond dimension $\chi$
  and an automatically decoupled correlator $|0\rangle$ at the end.
  Each unitary $\mathbb{U}_n$ is the dilation of $\mathbb{V}_n$,
  the embedding of $V'_n$.
  The initial state of the correlator can be chosen to be $|0\rangle$
  by absorbing a unitary gate, which converts $|0\rangle$ to $|L'\rangle$,
  into the first gate $\mathbb{U}_1$.
  }\label{fig:mps-same-bond}
\end{figure}

Consider the generation of a MPS in the form~(\ref{eq:tiobcmps})
with a constant bond dimension.
From Ref.~\cite{SSV+05} a MPS can be prepared deterministically such that
the correlator is decoupled at the end,
here this method is extended for the general qudit case.

With the isometry $V_n:=\sum_{i_n}|i_n\rangle A^{i_n}$ for each site,
a MPS with OBC~(\ref{eq:tiobcmps}) can be written as
\begin{equation}\label{}
  |\Psi\rangle=\langle R| V_N\cdots V_1|L\rangle.
\end{equation}
With $\langle R| V_N=(\mathds{1}_d\otimes \langle R| )V_N$ and from SVD
 \begin{equation}\label{}
   (\mathds{1}_d\otimes \langle R| )V_N=V'_NM_N,
 \end{equation}
for (i) $d<\chi$, $M_N$ of size $d\times \chi$, unitary $V'_N$ of size $d\times d$,
and (ii) $d\geq \chi$, $M_N$ of size $\chi\times \chi$, isometry $V'_N$ of size $d\times \chi$.
Now $M_NV_{N-1}$ is $(\mathds{1}_d\otimes M_N )V_{N-1}$, and perform SVD for the rest sites,
and for the last one define $|L'\rangle=M_1|L\rangle$, so
\begin{equation}\label{}
  |\Psi\rangle=V'_N\cdots V'_1|L'\rangle.
\end{equation}
From a rank consideration, the size of $V'_{N-k}$ is $d\min(\chi,d^k)\times \min(\chi,d^{k+1})$,
and the size of $M_k$ is always at most $\chi\times \chi$.
Now each $V'$ can be embedded into an isometry $\mathbb{V}$ of size $d\chi\times \chi$,
although the embedding is not unique.
This means a quantum circuit to realize the sequence of $\mathbb{V}_k$ can be used to prepare the MPS:
start from the state $|L'\rangle$, and perform the dilation $\mathbb{U}_k$ for each $\mathbb{V}_k$.
To show that the correlator can automatically decouple at the end,
there are three cases to consider:
\begin{enumerate}
  \item For $d^2\geq\chi>d$,
  the size of $V'_{N-1}$ is $d^2\times \chi$,
  while the size of its embedding $\mathbb{V}_{N-1}$ is $d\chi\times \chi$.
This embedding can be done by appending $\chi-d$ rows of zeros to each of the $d\times \chi$ matrices in $V'_{N-1}$,
and this means after the action of $V'_{N-1}$, the $\chi$-level correlator will only have amplitude on $d$ levels.
The embedding $\mathbb{V}_N$ can be obtained by first appending $\chi-d$ columns of normalized vectors,
and then inserting $\chi-1$ rows of zeros after each row in $V'_N$,
and this means that the state of the correlator will be annihilated by $V'_N$, i.e., mapped to dimension one,
and the correlator is converted to the last spin by $V'_N$.
  \item For $\chi\leq d$, the size of $V'_{N-1}$ is $d\chi\times \chi$, and its embedding is the same with itself;
and size of $V'_N$ is $d\times \chi$, and its embedding $\mathbb{V}_N$
can be obtained by inserting $\chi-1$ rows of zeros after each row in $V'_N$.
In this case, after $V'_{N-1}$ all levels of the correlator are occupied,
yet $V'_N$ will still annihilate the correlator.
  \item For $\chi>d^2$, the size of $V'_{N-1}$ is $d^2\times d^2$, and its embedding can be obtained
by first appending $\chi-d^2$ columns of normalized vectors,
and then appending $\chi-d$ rows of zeros to each of the $d\times \chi$ matrices in $V'_{N-1}$.
Still in this case after $V'_{N-1}$ only $d$ levels of the correlator are occupied,
which are further annihilated by $\mathbb{V}_N$.
\end{enumerate}
The quantum circuit can be shown as that in Fig.~\ref{fig:mps-same-bond}.
This also shows that in a LQTM one does not need to implement a projection on the correlator,
which is the processor of LQTM.

Here we apply this technique to the Bell states
$|\Phi^\pm\ket=\frac{1}{\sqrt{2}} (|00\ket\pm|11\ket)$,
and $|\Psi^\pm\ket=\frac{1}{\sqrt{2}} (|01\ket\pm|10\ket)$.
We only need the MPS form for $|\Phi^+\ket\equiv |\omega\ket$, and others can be easily obtained.
Let the two qubits be $\alpha$ and $\beta$, and a qubit ancilla be $a$,
we find
\be |\omega\ket=\bra 0|_a B A |0\ket_a,\ee
for $A=|0\ket_\alpha A^0 + |1\ket_\alpha A^1$,
$B=|0\ket_\beta B^0 + |1\ket_\beta B^1$,
with the tensors defined as
\be A^0=\mathds{1}/\sqrt{2},\; A^1=\sigma^x/\sqrt{2},\; B^0=P_0,\; B^1=\sigma^+,\ee
for $P_0=(\mathds{1}+\sigma^z)/2$, $\sigma^+=(\sigma^x+i\sigma^y)/2$,
and Pauli matrices $\sigma^x$, $\sigma^y$, $\sigma^z$,
and $|0\ket=(1,0)^t$, $|1\ket=(0,1)^t$.
The pair of matrices $A^0$ and $A^1$, $B^0$ and $B^1$ each form a quantum channel.
The quantum circuit to prepare $|\omega\ket$ is also easy to find
\be |\omega\ket= \bra 0|_a U_{\beta a} U_{\alpha a} |000\ket_{\beta \alpha a}, \ee
for $U_{\beta a}=S_{\beta a}$ as a swap gate realizing $B^0$ and $B^1$,
$U_{\alpha a}=\text{CNOT}_{\alpha a} H_\alpha$ with
the controlled-not ($\alpha$ as control) and Hadamard gate realizing $A^0$ and $A^1$.
The qubit ancilla $a$ automatically decouples simply because it is swapped with the qubit $\beta$.

\subsection{Composition}
\label{subsec:comp}

In the MPS circuit the starting state of system is usually
$|{\bf 0}\rangle\equiv|0\cdots 0\rangle$.
If the input $|\bf{0}\rangle$ is substituted by another MPS,
the output is still a MPS, but with a larger bond dimension.
Such a composition is useful when we consider a sequence of computations by a LQTM.

Let's denote a MPS by $|\Xi_a\rangle$
and the sequence of unitary operators in it as $\mathcal{U}^{(a)}$,
and $|\Xi_a\rangle := \langle R_a | \mathcal{U}^{(a)} |L_a\rangle |\bf{0}\rangle$ with bond dimension $\chi_a$,
and similarly for another MPS by $|\Xi_b\rangle$.
The composition of the two circuits leads to the state
\begin{align}\label{}
|\Xi_{ab}\rangle =  \langle R_b | \mathcal{U}^{(b)} |L_b\rangle |\Xi_a\rangle
=  \langle R_b |\langle R_a | \mathcal{U}^{(ab)}  |L_b\rangle |L_a\rangle |\bf{0}\rangle,
\end{align}
with $\mathcal{U}^{(ab)}:= \mathcal{U}^{(a)} \diamond \mathcal{U}^{(b)} $
for composition $\diamond$ defined as follows.
For $\mathcal{U}^{(a)}:=\prod_i U_i^{(a)}$, $\mathcal{U}^{(b)}:=\prod_i U_i^{(b)}$,
Let $\tilde{U}_i^{(a)}=U_i^{(a)}\otimes \mathds{1}^{(b)}$, $\tilde{U}_i^{(b)}=U_i^{(b)}\otimes \mathds{1}^{(a)}$,
then $\mathcal{U}^{(ab)}=\prod_i U_i^{(ab)}$ for $U_i^{(ab)}:=\tilde{U}_i^{(a)}\tilde{U}_i^{(b)}$.
The state $|\Xi_{ab}\rangle$ has bond dimension $\chi_{ab}=\chi_a \chi_b$,
and the boundary states of the correlator are
$|L_b\rangle |L_a\rangle$ and $\langle R_b |\langle R_a|$.
This property also holds when the technique to avoid the final projection
from section~\ref{subsec:finalp} is employed.

In addition, the tensor product of $|\Xi_{a}\rangle $ and $|\Xi_{b}\rangle $
also yields a new MPS $|\Xi_{a\otimes b}\rangle$ with
bond dimension $\chi_{a\otimes b}=\chi_a \chi_b$ and
\begin{align}\label{}
|\Xi_{a\otimes b}\rangle
= & \langle R_b |\langle R_a | \mathcal{U}^{(a\otimes b)}  |L_b\rangle |L_a\rangle |\bf{0}\bf{0}\rangle,
\end{align}
with $\mathcal{U}^{(a\otimes b)}=\prod_i U_i^{(a\otimes b)}$ for $U_i^{(a\otimes b)}:=U_i^{(a)}\otimes U_i^{(b)}$.

\section{Turing machines}
\label{sec:tm}

\subsection{Preliminary}
\label{subsec:pretm}

We first review the standard description of TM~\cite{BV97,For03}.
For convenience, we will use `cbit' for classical bit,
and `bit' as a general notion for a cit, pbit, or qubit.
A TM, classical, probabilistic, or quantum,
has a processor (also known as control),
denoted by the symbol~$Q$, state of which is often called `internal state',
and a register (tape) $\Gamma$ of a string of non-interacting bits,
which usually contains the input and output,
and a head, which can read, write, move left or right by at most one step, see Fig.~\ref{fig:tm}.
Usually the processor $Q$ is specified to have an initial internal state $q_0\in Q$
and a set of halting states $F\subseteq Q$
so that the machine halts when the internal state reaches a halting state.
There is a transition function $\delta$ which forms the program to solve a certain problem.
The transition function takes the form
\begin{equation}\label{eq:tmtransition}
\delta: Q \backslash F \times \Gamma \times Q \times \Gamma \times \{L,R,N\}\rightarrow
\mathcal{D},
\end{equation}
for
$  \mathcal{D}_\text{CTM}=\{0,1\},\; \mathcal{D}_\text{PTM}=[0,1],\; \mathcal{D}_\text{QTM}=\mathbb{C}$~\cite{BV97,For03}.
Here $L$ (left), $R$ (right), and $N$ (no movement) specifies the motion of the head~\cite{puncture}.

\begin{figure}[t!]
  \centering
  \includegraphics[width=.3\textwidth]{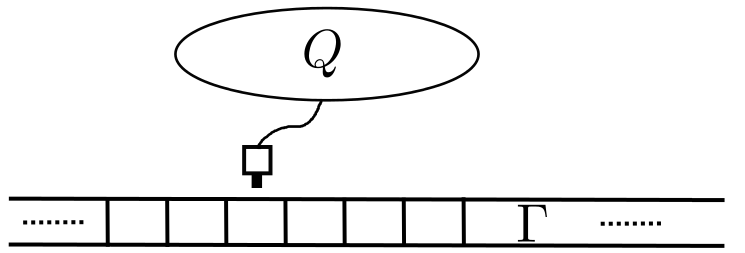}\\
  \caption{The model of Turing machines.}
  \label{fig:tm}
\end{figure}

A state of the whole machine is often known as a `configuration',
including the state of tape, head position, processor (and some others).
A computation on TM can be viewed as a sequence of configurations,
and a conversion between any successive two configurations can be described
by a permutation, stochastic process, or unitary evolution.
The tape is a {\em passive} memory, i.e.,
the bits on the tape do not interact with each other.
This means the computation is not carried out on the tape itself,
instead it is induced by the interaction between the tape and the processor.

To define a QTM~\cite{Deu85,BV97},
we employ the tuple form $\langle Q, F, H, P, \Gamma, \Sigma, \delta\rangle$:
\begin{enumerate}
  \item $Q$:  Hilbert space of the internal states.
  \item $F\subseteq Q$: the set of starting and halting states of the processor.
  \item $H$: Hilbert space of the halt qubit.
  \item $P$: Hilbert space of head position $P=\{|p\rangle\}$.
  \item $\Gamma$: Hilbert space of the quantum tape.
  \item $\Sigma \subseteq \Gamma$: the set of input states on the tape.
  \item $\delta_\text{QTM}$: transition map $Q \times \Gamma \times P \times H \rightarrow Q \times \Gamma \times P \times H $.
\end{enumerate}
Compared with the CTM and PTM, it is clear that
all the components in the configuration of a machine
including the head position, tape, processor, transition map and the halt operation
become quantum.

The tape is formed by a string of non-interacting qubits,
while each qubit has a quantum position index $|p\rangle$.
The transition map $\delta_\text{QTM}$ corresponds to a unitary operator,
and there are two important features of it.
First, the state of the head position $P$
has quantum correlations with the tape $\Gamma$ and the processor $Q$,
so the evolution on the tape and processor itself would not be unitary
if the computing part is isolated from the global unitary on the whole configuration.
Second, the `quantum walk' of the head can at most shift one position
in each step,
i.e., from $|p\rangle$ to $|p\pm 1, 0\rangle$.
This can be understood as a kind of locality in the space $P$,
which, however, does not correspond to a locality in the real space,
which is still a classical space.
After several steps of computation,
there will be a superposition of the head position and one will not be able to see where the head sits,
and the interaction between the processor $Q$ and the tape $\Gamma$ will become nonlocal.
The head also has quantum correlations with $\Gamma$ and $Q$,
so the evolution on $\Gamma$ and $Q$ would not be unitary
if the other parts are traced out.

\subsection{Local Turing machines}
\label{subsec:ltm}

The form~(\ref{eq:tmtransition}) is a \emph{global} description of a TM
and does not reveal the locality of interactions explicitly.
For a CTM, the head has a definite position in each step,
and needs to move in both directions to achieve universality.
The interaction between the processor and one cbit of the tape
is two-body and local in the space of position,
i.e., the `real' space
(compared to the momentum space) in physics.
A PTM can be viewed as a randomized CTM,
and the computation by a PTM is a randomized permutation,
i.e., a (doubly) stochastic process.
Each `trajectory' of PTM is a CTM,
hence the local structure of interactions in CTM carries over to PTM (Appendix~\ref{sec:ptm}).

To make the physical locality explicit and simplify
the functionality of TM,
we now introduce the model of local TM (LTM).
We will prove its universality and draw the connection with
matrix-product states (MPS).
Instead of a global description, a LTM is described via
the local interactions between the tape $\Gamma$ and the processor $Q$.
We will show that the model of LTM is equivalent to the standard TM
and the circuit model correspondingly.
A LTM is specified as follows.
\begin{Definition}
A LTM is represented by a tuple
$\langle Q, F, \Gamma, \Sigma, \delta_c, \delta_s, C\rangle$:
\begin{enumerate}
  \item $Q$: Space of the internal states.
  \item $F\subseteq Q$: the set of starting and halting states of the processor.
  \item $\Gamma$: Space of the tape as a product of local ones, $\Gamma_n$.
  \item $\Sigma \subseteq \Gamma $: the set of input states on the tape.
  \item $\delta_c$: local computing map
  $Q \times \Gamma_n \rightarrow Q \times \Gamma_n$.
  \item $\delta_s$: classical head position shift function $\mathbb{Z}\rightarrow \mathbb{Z} : p_\ell \mapsto p_{\ell+1}$ for $p_{\ell+1}=p_\ell\pm1,0$.
  \item $C$: classical control, i.e., a finite set of classical internal states.
\end{enumerate}
\end{Definition}

The local space $\Gamma_n$ on the tape is that for a bit.
The processor $Q$ can be represented as a set of bits,
which could interact with each other or not,
while the tape $\Gamma$
contains a string of non-interacting bits.
The sets $F$ and $\Sigma$ are defined for completeness,
yet we will not explicitly analyze their roles in this work.

The computing maps (or gates) $\delta_c$ specify a two-body interaction
between the processor and each bit on the tape.
Four types of gate are possible:
permutation, stochastic process, unitary evolution,
or completely positive trace preserving (CPTP) map,
also known as quantum channel or quantum stochastic process~\cite{NC00}.
The unitary (channel) case generalizes the permutation (stochastic) case.
There might be final measurements on the tape for the LQTM,
as we will see later on.

The classical control $C$ is
formed by a set of classical states $\{c\}$ that corresponds to the computing part,
and it has a starting state $c_0$ for the starting state,
and some halting states $\{c_f\}$ for the halting states of the processor.
The function of $C$ is to signal the process of the machine such that
the machine halts when the classical control is at a halting state.
It is also implicitly present in QCM while usually not mentioned.

The fundamental way to prove universality and study the relation among various models
is by simulation~\cite{AB09}.
There are many kinds of simulations according to convergence of variables or operator topology~\cite{Deu85,BV97,Yao93,Shi02,Nest11,BJS11,Wan15,FG13}.
Our framework of simulation is as follows.
The simulation of a TM $\textsc{m}$ by another TM $\textsc{u}$
is a task such that
\begin{equation}\label{}
\textsc{u}([\textsc{m}],[x])=[\textsc{m}(x)],\; \forall x,
\end{equation}
here $[\cdot]$ represents encoding, e.g.,
$[\textsc{m}]$ is the bit-string description of $\textsc{m}$.
The simulation is efficient if
there is only a polynomial overhead of cost for all input $x$.
Furthermore, as $[\textsc{m}]$ is only being read during the simulation,
$[\textsc{m}]$ does not have to be the input of $\textsc{u}$,
hence in fact
$\textsc{u}([x])=[\textsc{m}(x)]$, $\forall x$,
and there exists a program
\begin{equation}\label{}
  \textsc{p}([\textsc{m}],[x])=[\textsc{u}],\; \forall x,
\end{equation}
such that $\textsc{p}$ specifies
the process of $\textsc{u}$ to simulate $\textsc{m}$ on arbitrary $x$.
Each $x$ is an input of $\textsc{p}$ since
the simulation is to simulate the action of $\textsc{m}$ on $x$,
and both $\textsc{p}$ and $\textsc{u}$ are generically $x$-independent.
We will focus on simulation efficiency without a specification of simulation accuracy,
which simplifies our study and does not affect our conclusions.

\begin{proposition}
The models of LCTM, CTM, and CCM are equivalent.
\end{proposition}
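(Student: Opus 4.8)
The plan is to prove the three-way equivalence by establishing a cycle of efficient mutual simulations in the sense of the simulation framework above, so that each model reproduces the input--output map of the others with only polynomial overhead. It suffices to establish three directions, for instance that a standard CTM simulates an LCTM, that a CCM simulates a CTM, and that an LCTM simulates a CCM; equivalence then follows by transitivity. Throughout, the head position is classical in both the LCTM and the CTM, so no issue of superposed head locations (unlike the quantum case discussed in Section~\ref{subsec:pretm}) arises, which is exactly what makes the classical case the clean baseline.

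For LCTM $\leftrightarrow$ CTM, the key observation is that the global transition function $\delta$ of a standard CTM in~(\ref{eq:tmtransition}) already acts locally: at each step it depends only on the current internal state and the single symbol under the head, and it writes a symbol, updates the internal state, and shifts the head by at most one cell. First I would read off from $\delta$ the local computing map $\delta_c$ (the joint update of the processor and the one tape cell $\Gamma_n$ at the head), the shift function $\delta_s$ (from the $\{L,R,N\}$ component), and use the processor state together with a halt flag as the classical control $C$. The converse composition, assembling $\delta_c$, $\delta_s$, and $C$ into a single global $\delta$, is then immediate, so the two descriptions carry the same information and run in lockstep with constant overhead per step.

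For CTM $\leftrightarrow$ CCM I would use the standard tableau (computation-table) argument. A $T$-step CTM computation on an input of length $n$ visits at most $O(T)$ cells, so its sequence of configurations forms a $T \times O(T)$ table in which each row is obtained from the previous one by applying the same constant-size local update $\delta_c$ at the head together with the one-step shift; unrolling this table yields a circuit of size polynomial in $T$, hence an efficient simulation. Conversely, a uniformly generated classical circuit is evaluated gate by gate by a CTM that stores the gate list and the intermediate wire values on its tape and sweeps the head across them, again with polynomial overhead. Because the classical computing maps here are permutations (the classical analog of the unitary case), it is natural to take CCM to be the reversible circuit model; irreversible computation is recovered at polynomial cost by Bennett's history-erasure trick, so this choice costs no generality.

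I expect the main obstacle to lie not in the LCTM/CTM step, which is essentially definitional once the locality already present in~(\ref{eq:tmtransition}) is made explicit, but in the CTM/CCM step: a circuit is a finite object whereas a TM has an a priori unbounded tape and running time, so the simulation must first bound the space and time of the computation (and, in the circuit-to-TM direction, invoke uniformity of the circuit family) in order to keep the overhead polynomial and the encoding well defined. Managing these bounds and the bookkeeping of the classical control $C$ across the unrolling is the part that requires care, though all of it is standard in classical complexity theory.
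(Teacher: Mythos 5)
Your proposal is correct, and on the part that is actually new in this proposition---relating the LCTM to the other two models---it takes essentially the same route as the paper: both arguments rest on the observation that the global transition function~(\ref{eq:tmtransition}) of a CTM already acts locally, so that one step of a CTM decomposes into a head shift $\delta_s$ plus a permutation $\delta_c$ on the processor and one tape cell, and conversely each local permutation of an LCTM (or each gate of a circuit) is one CTM step (or a constant-size block of gates). The difference is in how the CTM/CCM leg is handled: the paper simply declares ``CTM and CCM are known to be equivalent'' and reduces the three-way claim to the two LCTM comparisons, whereas you close a cycle of three one-way simulations and re-derive the CTM--CCM equivalence from scratch via the tableau/unrolling argument, including the uniformity and space--time bounding caveats and the reversibility issue (taking CCM to be reversible circuits and invoking Bennett's trick, which is apposite since the paper's classical computing maps are permutations). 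Your version is more self-contained and more careful about the encoding and overhead bookkeeping that the paper leaves implicit; the paper's version is shorter because it outsources the classical folklore and spends its effort only on the LCTM, which is the genuinely new object. Neither approach has a gap; yours simply proves more than the paper chooses to.
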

\begin{proof}
We only need to show the equivalence between LCTM and CTM,
LCTM and CCM, since CTM and CCM are known to be equivalent.
Given a computation on LCTM,
with a processor $Q$ and a tape $\Gamma$ of a certain size,
each permutation $\Pi$ acts on a tape bit and the processor $Q$.
The simulation by a CTM is simple
by observing that each step in CTM is a permutation $\Pi$.
A gate $\Pi$ can be simulated by a sequence of Boolean gates in CCM.
A Boolean circuit acts on $|\Gamma|+|Q|$ bits can simulate the LCTM efficiently,
for $|Q|$ ($|\Gamma|$) as the number of bits to represent states of $Q$ ($\Gamma$).

Given a CTM which is a sequence of configurations,
the simulation by LCTM is as follows.
If at step $\ell$ the head position is $p_\ell$,
the symbol at position $p_\ell$ on the tape is $\gamma(p_\ell)$,
and the internal state is $q_\ell\in Q \backslash F$,
then the transition to the next step
is simulated by a shift operation on the head $p_\ell \mapsto p_{\ell+1}$,
for $p_{\ell+1}=p_\ell\pm1,0 \in \mathbb{Z}$,
and a permutation operation on the corresponding tape bit and the processor to realize
$(q_\ell, \gamma(p_\ell)) \mapsto (q_{\ell+1}, \gamma(p_{\ell+1}))$.
Given a Boolean circuit, each gate in it can be simulated efficiently by a
local permutation in a LCTM.
\end{proof}

It is also clear to see there exists a \emph{universal} LTM such that
it can simulate a given LTM efficiently.
The universality can also be seen from the universality of Boolean circuits,
which states that any Boolean function can be computed efficiently by a Boolean circuit.

When the bits in LTM are qubits and the computing maps are unitary operations,
we arrive at a LQTM.
The key difference from the classical cases is the quantum superposition,
which is an additional feature and shall not be viewed as a generalization of mixing.
Mixing and probability can be included in the quantum formalism by quantum channels.
However, replacing unitary operations by quantum channels do not change
the computational power of quantum computers~\cite{AKN97}
due to the dilation theorem~\cite{Sti55,Kra83,Cho75}.
As far as we know,
the dilation theorem does not exist for the classical case;
namely, it is not clear if a stochastic process can be embedded in a permutation on a larger space.
Instead, a doubly stochastic process is a convex sum of permutations on the same space.

It is well known that QCM is equivalent to QTM~\cite{BV97,Yao93,For03,MW19},
so we will not show the equivalence between QTM and LQTM directly.
Instead, we will show the equivalence between QCM and LQTM.

\begin{proposition}
The models of LQTM and QCM are equivalent.
\end{proposition}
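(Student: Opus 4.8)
The plan is to prove the equivalence by mutual efficient simulation, in the same spirit as the classical case treated in the preceding proposition. The direction from LQTM to QCM is the easy one. Because the head position of a LQTM is \emph{classical}, at every step it is definite which tape site $\Gamma_n$ the processor $Q$ currently acts on, so a LQTM computation is nothing but a classically scheduled sequence of two-body unitaries $\delta_c$ on $Q\otimes\Gamma_n$, interleaved with the deterministic shifts $\delta_s$ and the bookkeeping of the classical control $C$. This is literally a quantum circuit on the joint space $Q\otimes\Gamma$, reproduced gate for gate with no overhead. To read the output off the tape alone I would invoke the construction of section~\ref{subsec:finalp}, which makes the processor decouple automatically at the end, so that no residual projection on $Q$ is required.

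For the converse, QCM to LQTM, I would fix a universal gate set (arbitrary single-qubit gates together with CNOT), place the circuit's data qubits along the tape, and use the classical control $C$ to store the gate sequence as the program \textsc{p}. A single-qubit gate on qubit $i$ is immediate: drive the head to site $i$ by $\delta_s$ and apply the gate as the computing map $\delta_c$. The real work is a two-qubit gate, since $\delta_c$ couples $Q$ to only \emph{one} tape site at a time and the head may shift by at most one step. My plan is to use the processor as a one-qubit quantum bus: swap the state of qubit $i$ into a qubit register inside $Q$, walk the head step by step to site $j$, apply the target two-qubit unitary between the bus register and $\Gamma_j$ via $\delta_c$, and then walk back and swap qubit $i$'s state out again. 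Each nonlocal gate costs $O(n)$ head moves and $O(1)$ computing maps, so summing over the polynomially many gates of the circuit yields a polynomial overhead and hence an efficient simulation.

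I expect this bus protocol to be the main obstacle, in two respects: verifying that a processor of \emph{bounded} dimension suffices---one bus qubit, reused across gates, together with the finite classical control $C$---and checking that the swap-transport-gate-swap routine faithfully realizes the intended tape unitary while returning the processor to a fixed reference state, so that no spurious entanglement is left behind. The boundedness is by construction, since the bus qubit is restored after each gate and $C$ merely advances to the next instruction; the composition property of section~\ref{subsec:comp} then confirms that chaining these elementary routines remains within the LQTM (equivalently, MPS-circuit) framework. Once every universal gate is implemented in this way, universality of the gate set for QCM transfers to the LQTM, and together with the overhead-free first direction this establishes the claimed equivalence.
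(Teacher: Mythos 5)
Your proposal is correct and follows essentially the same route as the paper: the easy direction reads the classically scheduled two-body maps of the LQTM directly as a quantum circuit, and the converse handles the two-qubit gate by swapping one tape qubit's state into a processor ancilla, applying the gate between the ancilla and the other tape qubit, and swapping back, which is exactly the paper's identity $CZ_{ij}=S_{ie}CZ_{je}S_{ie}$ with the ancilla $e$ in the processor. Your explicit accounting of the $O(n)$ head walk and the appeal to the decoupling construction are just more detailed statements of what the paper leaves implicit.
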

\begin{proof}
Given a unitary circuit $U$ in the QCM,
its gates are assumed from the universal gate set $\{CZ,H,T\}$~\cite{NC00}.
The Hadamard gate $H$ and $T$ gate can be easily simulated.
Each gate $CZ_{ij}$ acting on qubits $i$ and $j$
can be simulated easily with a qubit ancilla $e$ at state $|0\ket$
which belongs to the processor with
\be CZ_{ij} |\psi_i\ket|\psi_j\ket |0\ket_e
=S_{ie}CZ_{je}S_{ie}|\psi_i\ket|\psi_j\ket |0\ket_e \label{eq:cz}\ee
for swap gate $S$.
As a result, the circuit $U$ can be efficiently simulated by a LQTM.
The simulation of a LQTM $\textsc{m}$ by a circuit in QCM
is simple: with the states of the processor encoded by qubits,
each local gate in $\textsc{m}$ can be simulated by an array of gates,
and in all simulated efficiently by a quantum circuit.
\end{proof}

The universality of QCM transfers to LQTM.
We observe that the simulation of the gate CZ~(\ref{eq:cz}) is non-sequential.
It is known that direct sequential unitary simulation of entangling gates are impossible~\cite{LLP+08}.
However, with teleportation we find the structure of LQTM can be further simplified.

\begin{proposition}
There exists a unilateral universal LQTM.
\end{proposition}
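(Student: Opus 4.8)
The plan is to realize a universal gate set through gate teleportation so that the head of the LQTM sweeps the tape in a single direction and never returns. Concretely, I would specialize the shift function $\delta_s$ to $p_{\ell+1}\in\{p_\ell,\,p_\ell+1\}$, dropping the leftward move, and carry the evolving logical state on the processor $Q$ (the correlator) rather than on the passive tape. Each forward step then teleports the logical qubit into a fresh tape site while applying one elementary gate, so that a length-$L$ computation is completed by an $O(L)$-step rightward sweep consuming $O(1)$ fresh tape qubits and processor interactions per logical gate.

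First I would fix the teleportation primitive. Using the Bell resource $|\omega\ket$, whose explicit constant-bond MPS form and one-directional preparation circuit were constructed in section~\ref{subsec:finalp}, gate teleportation in the sense of Ref.~\cite{GC99} implements a target gate $U$ on $|\psi\ket$ by a Bell measurement between $|\psi\ket$ and one half of $(U\otimes\mathds{1})|\omega\ket$, leaving $U|\psi\ket$ on the other half up to a Pauli byproduct fixed by the outcome. In the LQTM this is a strictly local, two-body interaction between the processor and the current tape qubit, after which the head advances one site; no direct entangling gate between separated tape qubits is ever required, which is exactly what sidesteps the no-go result of Ref.~\cite{LLP+08} on sequential entangling simulation. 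The nearest-neighbour entanglement seeding each teleportation is produced in the same forward pass, since $|\omega\ket$ is itself sequentially generated by the correlator sweep of section~\ref{subsec:finalp}.

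Next I would establish universality by assembling $\{CZ,H,T\}$ from teleportation gadgets. The single-qubit $H$ and $T$ are the standard one-bit teleportation gates, and the entangling $CZ$ is obtained from a two-qubit resource state prepared by the same forward sweep. The byproduct operators are not discarded but recorded in the classical control $C$: for the Clifford generators the Pauli corrections commute through subsequent gates and are updated by the (classical) conjugation rules, costing nothing beyond bookkeeping in $C$ plus a Pauli frame applied during the terminal measurement of the tape. Because each gadget consumes a bounded number of fresh qubits and forward steps, the overhead is polynomial and the simulation is efficient in the sense of section~\ref{subsec:ltm}.

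The main obstacle is the non-Clifford $T$ gate. Teleporting through $T$ produces an outcome-dependent correction that is not a bare Pauli but an $S=\mathrm{diag}(1,i)$ gate, so the Pauli frame cannot simply be propagated; the $S$ correction must be applied conditionally on a measurement result. The crux is therefore to feed this result forward \emph{without} reversing the head: I would have the classical control $C$ latch the relevant outcome and trigger the conditional $S$ on the next forward interaction with the same logical line, exploiting that $S$ is itself Clifford so that its own byproduct re-enters the tracked Pauli frame. Verifying that every such adaptive correction can be scheduled consistently within a single rightward sweep --- i.e.\ that the feed-forward dependencies never demand a backward move --- is the heart of the argument and the step I expect to require the most care.
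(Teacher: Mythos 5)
Your overall strategy --- use teleportation so that the head only ever moves forward --- is the same as the paper's, but you have introduced a difficulty the paper deliberately avoids, and you leave that difficulty unresolved. The paper does \emph{not} implement the whole gate set $\{CZ,H,T\}$ by gate teleportation. It keeps $H$ and $T$ as direct, deterministic local unitaries on the tape cell currently under the head (they are already sequential, so nothing needs fixing), and uses teleportation \emph{only} to serialize the $CZ$ gadget of Eq.~(\ref{eq:cz}), whose second visit to qubit $i$ is the sole obstruction to unilaterality. There the state of $i$ is teleported into a fresh downstream tape qubit $\alpha$ through a Bell pair $|\omega\ket$ that is itself prepared sequentially by the processor (section~\ref{subsec:finalp}); the only byproduct is a Pauli $\sigma^m_\alpha$, and the Bell measurement producing $m$ acts on qubits ($i$ and $\beta$) that are never touched again, so it can be deferred to the end of the computation. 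This is consistent with the LQTM definition, which admits only \emph{final} measurements on the tape.

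By contrast, your decision to realize $T$ itself by gate teleportation forces an outcome-dependent $S$ correction that cannot be absorbed into a Pauli frame, and therefore cannot be deferred: it requires a mid-computation measurement whose result is fed forward before the next gate on the same logical line. You correctly identify this as ``the heart of the argument,'' but you do not supply it, and it is not a routine verification --- it changes the model (adaptive measurements during the sweep rather than only at the end) and is precisely the complication the paper's construction is engineered to sidestep. A secondary concern: you propose to ``carry the evolving logical state on the processor,'' but the processor of a LQTM has fixed finite dimension while the number of logical qubits grows with the input, so the logical data must reside on the tape, with teleportation used only to relocate a single logical qubit past the point where an entangling gate consumed its original cell. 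As written, then, the proposal has a genuine gap at the non-Clifford step; repairing it most cleanly means reverting to the paper's division of labor, applying $H$ and $T$ unitarily and teleporting only around the $CZ$.
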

\begin{proof}
The non-sequential simulation of the gate CZ~(\ref{eq:cz}) can be converted as a sequential one
with teleportation gadget.
With the MPS form of the Bell state $|\omega\ket$ discussed in section~\ref{subsec:finalp},
a gate $CZ_{ij}$ can be simulated as
\bea &&CZ_{ij}|\psi_i\ket|\psi_j\ket|0\ket_e |0\ket_a \\ \nonumber
&&=\sigma^m_\alpha M_{i\beta}^m
U_{\beta a}CZ_{je}S_{\alpha e}U_{\alpha a}S_{ie}
|\psi_i\ket |0\ket_\alpha |\psi_j\ket |0\ket_\beta |0\ket_e |0\ket_a. \eea
The Bell measurement $M_{i\beta}^m$ with Pauli correction $\sigma^m_\alpha$
will teleport the state of $i$ to $\alpha$.
This product of unitary gates is sequential,
and qubits $e$ and $a$ belong to the processor,
and qubits $i$ and $\beta$ on the tape will be measured.
Such a LQTM is unilateral while the exception is that Bell measurements have to be done
on the tape at the end of the computation.
A Bell measurement here can be simplified to projective measurements on $i$ and $\beta$ since
the state of $i$ is $|0\ket$.
\end{proof}

This compares to the classical case.
A CTM with a one-direction moving head is not universal.
It is also known as a finite state transducer,
which is a deterministic finite automata that the input is only read once~\cite{AB09}.
This highlights the crucial role of entanglement and teleportation for quantum computing.

A unilateral LQTM is nothing but the process to prepare MPS.
The interaction between processor $Q$ and tape $\Gamma$ is sequential,
and furthermore, we showed that $Q$ can be automatically decoupled
at the end of the computation,
with the output contained solely on the tape $\Gamma$.

\subsection{Multipartite setting}
\label{subsec:multi}

\begin{figure*}[t!]
  \centering
  \includegraphics[width=.8\textwidth]{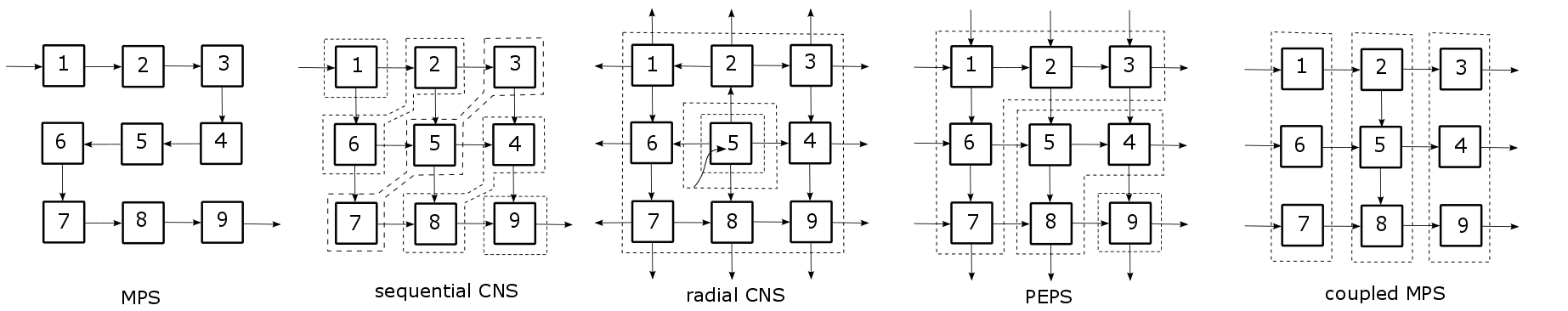}\\
  \caption{Schematic diagrams of examples of TNS.
  The boxes with numbers represent channels,
  after dilation, each channel yields a spin that belongs to the final TNS.
  The left one with a linear information flow is the usual (linear) MPS with nine stages of the flow.
  The dashed circles highlight the stages in other states:
  the sequential TNS has five stages,
  radial TNS has two stages,
  PEPS has three stages,
  and the coupled MPS also has three stages.
  The channels within a stage can also be ordered according to their flows.
  }\label{fig:peps}
\end{figure*}

In general, a LQTM can have multiple tapes and processors as in the classical cases
for various practical purposes,
e.g., each processor can be a small system, even a single qubit.
This requires a slight extension of MPS to tensor-network states (TNS).

In a MPS the correlator (processor)
is acted upon by a sequence of quantum channels,
without a detailed structure of the channels and the free propagations between them.
To make this clear, let us define MPS with more general information flows
as quantum tensor-network states (TNS),
and the standard setting with a linear information flow as (linear) MPS.

Below we introduce TNS from the viewpoint of channel networks.
For a directed acyclic graph $G=(E,V)$ with edge set $E$ and vertex set $V$,
assign a quantum channel to each vertex, and then
the composition of quantum channels
$\mathcal{E}:=\mathcal{E}_{|V|}\circ \cdots \mathcal{E}_2 \circ \mathcal{E}_1$
forms an acyclic quantum channel network.
The requirement of cycle-free is to avoid causality problem,
i.e., the output of a channel cannot become the input of it at a later time.
Also note here in this section,
a general CP map, which may not be trace-preserving or dimension-preserving,
is viewed as a quantum channel.

The output of $\mathcal{E}_\ell$ is from the input of $\mathcal{E}_{\ell+1}$,
and each channel $\mathcal{E}_\ell$ can contain several parts, e.g.,
$\mathcal{E}_\ell=\mathcal{E}_{\ell_1}\otimes \mathcal{E}_{\ell_2}$
or $\mathcal{E}_\ell=p_1 \mathcal{E}_{\ell_1}+ p_2 \mathcal{E}_{\ell_2}$
for $p_1+p_2=1$ as a convex combination,
or other complicated forms,
and accordingly, the input and output of each channel $\mathcal{E}_\ell$
can contain several parts.
Similar definitions can be found in other settings~\cite{Wat18,CDP09,KW05},
and have been called quantum networks~\cite{CDP09},
quantum channels with memory~\cite{KW05},
while the definition above is used to introduce TNS.

Given the Kraus operator representation $\{K_i\}$ of a channel $\mathcal{E}$,
which may not be trace-preserving or dimension-preserving,
the channel can also be written as an operator $V=\sum_i |i\rangle K_i$,
which is an isometry if trace-preserving.
For a channel network $\mathcal{E}=\mathcal{E}_N\circ \cdots \mathcal{E}_2 \circ \mathcal{E}_1$
with boundary states $|I\rangle$ and $\langle O|$,
given the operator $V_\ell$ of each channel $\mathcal{E}_\ell$,
a TNS is
\begin{equation}\label{}
|\Psi\rangle= \langle O| \prod_\ell V_\ell |I\rangle.
\end{equation}

The channels act on the correlation space.
Examples of TNS are shown in Fig.~\ref{fig:peps}.
The linear MPS has the simplest information flow structure,
while all others are still MPS but with branches.
Note that the motivation to allow non trace-preserving or dimension-preserving channels
is that the norm of a TNS does not play a central role.
For instance, in the PEPS form the channel at each vertex is usually not trace-preserving,
and in the coupled MPS form there are dimension-altering channels.
In many-body physics, MPS is usually used to represent 1D systems,
while PEPS is used for 2D or 3D systems on different lattices.

The flow in a TNS represents the evolution of the ancilla (correlator),
corresponding to the sequence of matrix multiplications (or tensor contractions).
For LQTM, the lattice of qubits form the tape,
the correlator serves as the processor,
and there might be qubits that need to be measured at the end of computation
due to the simulation of entangling gates.
A multipartite LQTM has a more complicated information flow structure,
with its output described as a tensor-network state (TNS),
which can still be simulated by a single-tape single-processor universal LQTM,
since a TNS is also a MPS with a larger bond dimension.

\begin{proposition}
  A LQTM with multiple tapes and/or processors
  can be simulated by a single-tape single-processor universal LQTM.
\end{proposition}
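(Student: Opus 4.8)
The plan is to reduce the multipartite case to the single-tape single-processor universal LQTM already constructed, using the observation made in the paragraph preceding this proposition: the output of a multipartite LQTM is a TNS, and a TNS is itself a linear MPS of larger bond dimension. First I would make the claim ``the output is a TNS'' precise. With several tapes and processors, the local computing maps $\delta_c$ acting between the processors and their tape qubits define, after dilation, a network of quantum channels on the joint correlation space. Because each step is local and the head positions are classical, these channels sit at the vertices of a directed acyclic graph $G=(E,V)$ exactly as in Section~\ref{subsec:multi}; acyclicity is enforced by causality, since the output of a channel never feeds back as its own earlier input. Hence the joint output is the channel-network state $|\Psi\rangle=\langle O|\prod_\ell V_\ell|I\rangle$, i.e., a TNS.

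Next I would flatten this TNS into a standard linear MPS. Since $G$ is acyclic it admits a topological order; applying the channels in that order while merging all the individual processor spaces into a single correlator converts the branched information flow into a single linear one. This is precisely the content of the composition rule of Section~\ref{subsec:comp}: both the composition $\diamond$ and the tensor product of MPS circuits again yield an MPS, with bond dimension equal to the product $\chi_{ab}=\chi_a\chi_b$ of the constituents, and with boundary correlator states that factor as $|L_b\rangle|L_a\rangle$ and $\langle R_b|\langle R_a|$. The TNS is therefore reproduced as a single-correlator MPS of the form~(\ref{eq:tiobcmps}) with some bond dimension $\chi\le\prod_a\chi_a$.

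It then remains to prepare this MPS on one tape with one processor. By the unilateral universal LQTM established above---``a unilateral LQTM is nothing but the process to prepare MPS''---a single-tape single-processor LQTM prepares any MPS, and by the technique of Section~\ref{subsec:finalp} the merged processor decouples automatically at the end, leaving the output on the tape; any qubits that must be measured (from the teleportation simulation of entangling gates) are measured on the tape just as in the single-partite case. The overhead is polynomial: the merged correlator has dimension at most $\prod_a\chi_a$, so it is represented by at most $\sum_a\log_2\chi_a$ qubits---exactly the total number of processor qubits already present---while the number of dilated gates equals the total number of channels, unchanged up to the bookkeeping of the topological order.

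The step I expect to be the main obstacle is the flattening itself: one must verify that serializing an arbitrary DAG of dimension-altering, possibly non-trace-preserving channels into a single linear sequence genuinely preserves the contracted state and respects the boundary data $|I\rangle$ and $\langle O|$. The composition lemma of Section~\ref{subsec:comp} settles the bond-dimension and boundary bookkeeping for a two-branch merge, but extending it to an arbitrary acyclic network---and confirming that the result is independent of which valid topological order is chosen---is where the real care is required.
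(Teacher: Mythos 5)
Your proposal follows essentially the same route as the paper's own proof: observe that the multipartite machine's output is a TNS (the coupled-MPS scheme of Fig.~\ref{fig:peps}), note that a TNS is again an MPS with larger bond dimension, and invoke the single-tape single-processor universal LQTM to prepare that MPS. Your additional care about serializing the channel DAG via a topological order and the composition rule of Section~\ref{subsec:comp} goes beyond the paper's two-line sketch, which simply asserts the TNS-to-MPS reduction, but it is the same argument in substance.
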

\begin{proof}
If the processor is single-partite while the tape is $m$-partite,
then in the MPS circuit each unitary operator acts on the processor
and $m$ qubits, with one from each tape.
If the processor is also multi-partite,
then this leads to multi-tape multi-processor machine,
for which there could be coupling between different parts,
and this corresponds to the coupled MPS scheme shown in Fig.~\ref{fig:peps}.
The resulting TNS on the tape is still a MPS,
which can be prepared on a universal LQTM.
\end{proof}

\section{Conclusion}
\label{sec:conc}

In this work a model of local Turing machines is introduced.
We show that the model of local classical (quantum) Turing machine
is equivalent to the model of standard classical (quantum) Turing machine
and classical (quantum) circuit model.
The structure of a
local quantum Turing machine can be described based on matrix product states
(and teleportation).
Our work simplify the construction of quantum Turing machines
and establish a close relationship with quantum many-body systems.

While the interaction between a tape bit and the processor
seems no more easier than that between bits,
the model is suitable for situations when direct interaction among bits is difficult,
such as distributed computing and communication
and when the tape and processor encoded in different physical systems.
Models like the qubus model in quantum optics can be viewed as special kinds of
local Turing machines.
Finally, the processor can also contain multiple parts,
and the design and complexity of its structure are nontrivial subject on its own.

This work does not intend to study Turing machine from the viewpoint of computer scientists.
Issues like grammar, language, complexity etc,
and relations with other universal quantum computing models
shall also be pursued for separate investigations.

\section{Acknowledgements}

This research has been supported by NSERC.
The author acknowledges the Department of Physics and Astronomy
at the University of British Columbia, Vancouver,
where the first draft of the paper was completed.

\appendix

\section{Probabilistic Turing machines}
\label{sec:ptm}

Here we present a study of probabilistic Turing machines (PTM) and the local versions.
Recall that a PTM can be understood as a randomized CTM,
and the randomness can be realized by random variables,
which can be encoded by a string of pbits on a so-called random tape,
and the computation by a PTM is a randomized permutation,
which can be described by a doubly stochastic matrix.
Also each step of a PTM is a stochastic matrix
$S=\sum_\lambda \mathfrak{p}_\lambda \Pi_\lambda$
for a set of permutations $\Pi_\lambda$ with probability $\mathfrak{p}_\lambda$,
which is represented on the random tape.
The product of a sequence of stochastic matrices can be expressed as
\begin{equation}\label{}
\prod_i S_i=\sum_{\lambda_1,\lambda_2,\dots} \mathfrak{p}_{\lambda_1}\mathfrak{p}_{\lambda_2}\cdots \left(\prod_i  \Pi_{\lambda_i}\right),
\end{equation}
and each sequence in the parenthese above represents a CTM with corresponding probability.
That is, a CTM realizes a particular trajectory of a PTM.
Each permutation $\Pi_{\lambda_i}$ acts on the processor and a single cbit.
The output of a PTM contains the final states $\gamma\in \Gamma$ on the tape with probability
\begin{equation}\label{eq:p}
  \mathcal{P}(\gamma)= \sum_{p\in \mathbb{Z},q\in Q} \mathcal{P}(p, q, \gamma),
\end{equation}
where the sum is over position $p$ and internal state $q$
for the same $\gamma$.

Observe that the PTM is \emph{fully} probabilistic:
the computation on the whole configuration of the machine is stochastic.
As a result, there is also a probability distribution of the head position:
it is uncertain where the head is during each step of the computation.
However, this actually does not cause physical problems
thanks to different interpretations of probability:
the frequency interpretation and ensemble interpretation.
In the former one, probability is the ratio $n/N$ of
the number of times $n$ for the occurrence of a particular event
to the total amount of runs $N$.
In the latter one,
given a total amount $N$ of a collection of objects,
the probability of a particular object is the weight $n/N$ given
$n$ copies of this object.
The probability in PTM is in the frequency interpretation.
As a result, a PTM can be viewed as a randomized CTM.
However, there is no such interpretations of quantum superposition,
which causes the subtlety of locality for QTM as we studied in the main text.

A LPTM can be defined by deleting the randomness of head positions.
A LPTM can also be viewed as a LCTM with one additional tape of pbits.
Given a LPTM, it can be simulated by a PTM easily
since LPTM is a restricted version of PTM.
Given a PTM, the simulation by a LPTM contains two steps:
first decompose the PTM as pbits and a collection of CTMs,
then the CTMs each can be simulated by a LCTM according to the pbits.
With this, we see that the model of LPTM is equivalent to PTM,
except that the pbits are given as a free resource.

\subsection{Stochastic matrix product states}
\label{sec:smps}

Here we show that pbits (when not free) can be prepared by a LPTM.
The reason is that,
the states of pbits, as probability vectors, can be written as stochastic MPS (sMPS)(see, e.g., Ref.~\cite{TV10}).
We show that each pbit on the tape is only acted upon once,
i.e., the read/write head is unilateral,
and the processor is automatically decoupled at the end.

It is shown that~\cite{TV10} any probability vector $|p\rangle$ can be written as
a sMPS form
\begin{equation}\label{}
  |p\rangle=\sum_{i_1,\dots, i_N} A_{i_1}^{[1]} P^{[1]} A_{i_2}^{[2]}\cdots P^{[N-1]} A_{i_N}^{[N]}
  |i_1\dots i_N\rangle
\end{equation}
such that $S^{[n]}:=P^{[n-1]}C^{[n]}$ is a stochastic matrix for
$  C^{[n]}=\sum_{i_n} A_{i_n}^{[n]}$.
Furthermore, we find this can also be proved using the non-negative matrix factorization (NMF) method~\cite{LS99,LS01,HD08,YZY+11}.
A matrix is non-negative iff all its entries are equal to or greater than zero.
In particular, given a $m\times n$ non-negative matrix $A$,
it can be well approximated by
\begin{equation}\label{}
  A'=PDQ^t
\end{equation}
such that the generalized Kullback-Leibler divergence
$D(A||A')$ is minimized for $k\leq \min(m,n)$,
wherein $P$ is $m\times k$, $Q$ is $n\times k$, and both are column stochastic,
and $D$ is diagonal non-negative such that $\sum_i D_{ii}=\sum_{ij} A_{ij}$~\cite{HD08}.
The elements $D_{ii}$ play similar roles with singular values.

Given a multi-partite probability vector $|p\rangle$ written as
$|p\rangle=\sum_{i_1,\dots, i_N}^d p({i_1,\dots, i_N}) |i_1\dots i_N\rangle$,
define a matrix $\mathfrak{C}$ with dimension $d\times d^{N-1}$ and elements $\mathfrak{C}_{i_1,(i_2,\dots,i_N)}=p(i_1,\dots,i_N)$.
By NMF $\mathfrak{C}=PDQ^{t}$ and
\begin{equation}\label{}
\mathfrak{C}_{i_1,(i_2,\dots,i_N)}=
\sum_{a_1}^{r_1} P_{i_1, a_1} D_{a_1, a_1}Q^{t}_{a_1,(i_2,\dots,i_N)},
\end{equation}
for $r_1\leq d$.
Denote $D_{a_1, a_1}Q^{t}_{a_1,(i_2,\dots,i_N)}=p(a_1,i_2,\dots,i_N)$,
and a \emph{row} vector $B^{i_1}$ with element $B^{i_1}_{a_1}=P_{i_1, a_1}$, then $\mathfrak{C}_{i_1,(i_2,\dots,i_N)}=\sum_{a_1}^{r_1}B^{i_1}_{a_1}p(a_1,i_2,\dots,i_N)$.
Put $B^{i_1}$ on the most left.
The coefficients $p(a_1,i_2,\dots,i_N)$ can form a new matrix $\mathfrak{C}'$.
By NMF again
\begin{equation}\label{}
\mathfrak{C}_{i_1,(i_2,\dots,i_N)}
=\sum_{a_1}^{r_1}\sum_{a_2}^{r_2}B^{i_1}_{a_1}B^{i_2}_{a_1,a_2}p(a_2,i_3,\dots,i_N),
\end{equation}
for $r_2\leq r_1 d$,
and elements $B^{i_2}_{a_1,a_2}$ form a $r_1\times r_2$ matrix.
At the end
\begin{equation}\label{}
p(i_1,\dots,i_N)=\sum_{a_1,\dots,a_N}^{r_1,\dots,r_N} B^{i_1}_{a_1} B^{i_2}_{a_1,a_2}\cdots B^{i_{N-1}}_{a_{N-2},a_{N-1}} B^{i_N}_{a_N-1},
\end{equation}
and also
\begin{align}\label{}
|p\rangle=\sum_{i_1,\dots,i_N}^d \langle B_{i_1}^{[1]}| B_{i_2}^{[2]}\cdots
B_{i_{N-1}}^{[N-1]} |B_{i_N}^{[N]}\rangle |i_1\dots i_N\rangle
\end{align}
such that each $S^{[n]}:=\sum_{i_n} B_{i_n}^{[n]}$ is column stochastic.
The dimension of $B$ matrices is upper bounded by $d^{N/2-1}\times d^{N/2}$.
Note this is a left-canonical form,
a right-canonical form and mixed form can also be derived
analog with the quantum case~\cite{Sch11}.
Also two boundary probability vectors $\langle \ell|$ and $|r\rangle$ can be pulled out
such that
\begin{equation}\label{}
  |p\rangle=\sum_{i_1,\dots, i_N}\langle \ell| B_{i_1}^{[1]} \cdots B_{i_N}^{[N]} |r\rangle
  |i_1\dots i_N\rangle.
\end{equation}

The next problem now is to automatically decouple the correlator
from the system at the final step.
The method is to apply NMF sequentially again.
Let $S_n=\sum_{i_n} B_{i_n}^{[n]} |i_n\rangle$.
Now assume the bond dimension is $\chi$.
First, as the matrix $(\mathds{1}\otimes \langle \ell|) S_N$ is non-negative,
it can be factorized as
\begin{equation}\label{}
  (\mathds{1}\otimes \langle \ell|) S_N=S_N' T_N
\end{equation}
for $S_N'$ column stochastic and $T_N$ non-negative.
The matrix $T_N S_{N-1}$ can be factorized again,
and then
\begin{equation}\label{}
  |p\rangle=S_N' \cdots S_1' |r'\rangle,
\end{equation}
for each $S_n'$ column stochastic and a probability vector $|r'\rangle$.
Now each $S_n'$ can be embedded into a column stochastic matrix $\mathbb{S}_n$
of size $d\chi \times \chi$ and as the result,
\begin{equation}\label{}
  |p\rangle=\mathbb{S}_N \cdots \mathbb{S}_2  \mathbb{S}_1 |r'\rangle.
\end{equation}
Given $\mathbb{S}_n=\sum_{i_n} |i_n\rangle B_{i_n}^{[n]}$,
a non-unique square column-stochastic matrix $\mathbb{Q}_n$
of dimension $d\chi$ can be defined such that
$\mathbb{S}_n$ occupies its first block-column.
The matrix $\mathbb{Q}_n$ can be viewed as $\mathcal{S}_{\mathcal{E}_n}$,
the stochastic version of $\mathcal{T}_{\mathcal{E}_n}$
for a quantum channel $\mathcal{E}_n$ according to section~\ref{subsec:qchannel}.
As the result, any probability vector can be generated sequentially
using stochastic matrices $\{\mathbb{Q}_n\}$,
each acting on the correlator and a pbit initialized at $|0\rangle$,
such that the correlator is automatically decoupled at the end.

\bibliography{ext}{}
\bibliographystyle{unsrt}

\end{document}